\documentclass[10pt]{article}
\usepackage{amsmath}
\usepackage{amsthm}
\usepackage{graphicx}
\usepackage{epsfig}
\usepackage{amssymb}
\usepackage{multirow}




\begin{document}

\label{firstpage}

\title{A multidimensional maximum bisection problem
}

\author{Zoran Maksimovi\'c \\
Military Academy, generala Pavla Juri\v{s}i\'{c}a \v{S}turma 33, \\11000 Belgrade, Serbia \\
zmaksimovic@beotel.net
}

\maketitle

\begin{abstract}
This work introduces a multidimensional generalization of the maximum bisection problem. A mixed integer linear
programming formulation is proposed with the proof 
of its correctness. The numerical tests, made on the randomly generated graphs, indicates that the multidimensional
generalization is more difficult to solve than the original problem. 
 
\end{abstract}

{\em Keywords:} Graph bisection, Mixed integer linear programming, Combinatorial
optimization.

ACM Computing Classification System (1998): G.1.6

\section{Introduction}

The maximum bisection problem (MBP) is a well known combinatorial optimization problem.
For a weighted graph $G=(V,E)$ with non-negative weights on the edges and where $|V|$ is an even number, 
the maximum bisection problem  consists in finding 
a partition of the set of vertices $V$ in two subsets $S$ and $V\setminus S$, where $|S|=|V\setminus S|$ and the sum of weights of the edges 
between the sets is maximal. The maximum bisection can be applied in different fields such 
as VLSI design \cite{slov}, image processing \cite{shij},  compiler optimization \cite{hand}, etc. 

The maximum bisection problem is NP hard as shown in \cite{garr}. 
The complexity of finding optimal and good solutions of maximum bisection problem 
has given raise to various solution approaches ranging from application algorithms, exact methods to metaheuristics. 

Widely used mathematical formulation with binary variables $x_j$ assigned to each vertex can be presented as 

\begin{flalign*}
\max & \frac{1}{4}\sum_{i,j} w_{ij}(1-x_ix_j)\\
{\rm s.t.\ } &e^T x=0\\
& x^2_j=1,\ \ \ j=1,\ldots ,n\\
\end{flalign*}

where $e\in {R}^n$ is the column vector of all ones, and $^T$ is the transpose operator. It should be noted
that $x_j$ is either $1$ or $-1$ so either $S=\{j|x_j=1\}$ or $S=\{j|x_j=-1\}$. 

This formulation enabled
approximation algorithms based on semidefinite programming. 
Goemans and Williamson approach to maximum bisection in \cite{goemans} was extended by 
Frize and Jerrum in \cite{frieze}
and produced randomized $0.651$ approximation algorithm. In \cite{ye} Ye improved performance ratio to $0.699$
with modification of Frieze and Jerrum approach. 
The approximation ratio was further improved to $0.7016$ by 
Halperin and Zwick in \cite{zwick}, including some triangle inequalities in the
semidefinite programming relaxations.

The main goal of these approaches is the performance guarantee so they are not competitive
with other methods for comparison in computational testing.
In paper \cite{hastad} a proof that there is no polynomial approximation algorithm with performance ratio
greater than $\frac{16}{17}$ is given.  

Beside these approximation algorithms, there are several approaches for its exact solving such as 
linear and semidefinite
branch-and-cut methods \cite{armb},
intersection of semidefinite and  polyhedral relaxations \cite{rendl}.

In \cite{armb} is discussed  the minimum graph bisection problem and branch-and-cut approaches
for finding its solution. The problem definition can be described as follows:

Let $G = (V, E)$ be an undirected graph with $V = \{1, \dots , n\}$ and 
edge set $E \subseteq \{\{i, j\} : i, j \in V, i < j\}$. For given vertex weights $f_v \in {\mathbb{N}} \cup \{0\}, v \in V$, and edge costs $w_{i,j} \in {\mathbb R}$, $\{i, j\} \in E$, a partition of the vertex set $V$ into two disjoint clusters $S$
and $V\setminus S$ with sizes $f(S) =\sum_{i\in S} f_i \le F$ and $f(V \setminus S) \le F$, 
where $F \in {\mathbb {N}}\cap [\frac12 f(V),f(V)]$,
is called a bisection. Finding a bisection such that the total cost of edges in the
cut $\delta(S) := \{\{i, j\} \in E : i \in S \land j \in V \setminus S\}$ is minimal is the minimum bisection
problem (MB). 

If the function $f$ which represents the weight of nodes is equal to one for all nodes
and $F$ is equal to $\frac12 |V|$ and weights on edges $w_{ij}$ takes negative values
this problem becomes the maximum graph bisection problem. In order to apply brunch-and-cut 
approaches authors in \cite{armb} presented an integer linear programming formulation.

It can be assumed without loss of generality that $G$ contains a spanning star rooted at $s$. Indeed, for a selected node $s \in V$ the set of edges can be extended so that $s$ is adjacent to all other nodes in $V$, where the weights $w$ of new edges is equal to zero. 

Let $y_{ij}$ be the binary variables defined as  

$$y_{ij} = \left\{
\begin{array}{rl}
	1, &  {\rm if\ } ij {\rm\ is\ in\ the\ cut}\\
	0, & {\rm otherwise,}
\end{array}
\right.
$$

The mathematial model is formulated as follows:

\begin{flalign*}
\min & \sum\limits_{ij} w_{ij}y_{ij}\\
{\rm s.t.\ } &f_s+\sum\limits_{v\ne s} f_v(1-y_{sv})\le F\\
& \sum\limits_{v\ne s} f_v y_{sv}\le F\\
& \sum\limits_{ij\in C\setminus U} y_{ij}+\sum\limits_{ij\in U} (1-y_{ij})\ge 1, {\rm\ \  cycle\ }C\subseteq E,
{\rm odd\ } U\subseteq C\\
& y\in \{0,1\}^E
\end{flalign*}

Semidefinite programming formulation given in \cite{armb} is very similar to the one
already presented in this paper. 
Separation routines
for valid inequalities to the bisection cut polytope 
is developed and incorporated 
and incorporated into a common branch-and-cut framework for linear and semidefinite relaxations.
On the basis of large sparse instances coming from VLSI design they
showed the good performance of the semidefinite approach versus the mainstream
linear one.

In the paper \cite{rendl} authors presented a method for finding exact solutions of the Max-Cut problem
based on semidefinite formulation. 
Semidefinite relaxation is used and combined with triangle inequalities,
which is  solved with the bundle method. This approach uses Lagrangian duality to get upper 
bounds with reasonable computational effort. 
The expensive part of their bounding procedure is solving the basic semidefinite
programming relaxation of the Max-Cut problem. 
Authors also discussed applicability of their approach on the special case 
of Max-Cut problem where cardinality of partitions is equal i.e.
maximum graph bisection problem.

Another set of approaches, especially for larger scale instances
are metaheuristics. 
From the wide field of applied metaheuristics
let mention some of them such as: memetic search \cite{wu}, variable
neighbeerhood search \cite{ling}, neural networks \cite{fengmin}, 
deterministing anealing \cite{dang}

Memetic search approach presented in \cite{wu} 
integrates a grouping crossover operator and a tabu search optimization procedure.
The proposed crossover operator preserves the largest common vertex groupings
with respect to the parent solutions while controlling the distance between the
offspring solution and its parents. Experimental results indicates that the 
memetic algorithm improves, in
many cases, the best known solutions for MBP.

Variable neighborhood search metaheuristic can obtain high quality solution for max-cut problems.
However, comparing to max-cut problems, max-bisection problems have more complicated feasible 
region via the linear constraint $e^Tx = 0$. It is hard to directly apply the
typical VNS metaheuristic to deal with max-bisection problems. In \cite{ling} Ling et al.
combined the constraint $e^Tx=0$ with the objective function and  
obtained a new optimization problem which is equivalent to the max-bisection problem, 
and then adopted a distinct greedy local search technique to the resulted problem. 
This modified VNS metaheuristic based on the greedy local search technique
is applied to solve max-bisection problems. 
Numerical results indicate that the proposed method is efficient and can obtain high
quality solutions for max-bisection problems.

In \cite{fengmin}, a new Lagrangian net algorithm is proposed to solve max-bisection problems. 
The bisection constraints is relaxed to the objective function by introducing the penalty
function method. A bisection solution is calculated by a discrete Hopfield neural
network (DHNN). The increasing penalty factor can help the DHNN to escape from the
local minimum and to get a satisfying bisection. The convergence analysis of the proposed
algorithm is also presented. Finally, numerical results of large-scale G-set problems show
that the proposed method can find a better optimal solutions.

A deterministic annealing algorithm is proposed for approximating solution of max bisection problem in \cite{dang}. The algorithm is derived from the introduction of a square-root barrier function, where the barrier parameter behaves as temperature in an annealing procedure and decreases from a sufficiently large positive number to 0. The algorithm searches for a better solution in a feasible descent direction, which has a desired property that lower and upper bounds on variables are always satisfied automatically if the step length is a number between 0 and 1. It is proved that the algorithm converges to at least an integral local minimum point of the continuous problem if a local minimum point of the barrier problem is generated for a sequence of descending values of the barrier parameter with zero limit. Numerical results show that the algorithm is much faster than one of the best existing approximation algorithms while they produce more or less the same quality solution.

Any partition of the node set $V$ in two sets defines a set of edges, that we call a {\it cut}, with ends in different partitions. If a graph has weight on edges, than {\it weight of the cut} 
is defined as the sum of weights of edges in the cut. The problem of finding a partition
of the node set where the weight of the cut is maximal is called a Max-Cut problem. From this
definition it follows that there is no restriction on the cardinality of the partitions. 
Maximum graph bisection problem is obtained from Max-Cut problem
if it is required that the partitions have equal cardinality. From the definition 
it follows that the Max-Cut problem is a generalization of the maximum graph bisection problem,
and that maximum graph bisection problem can be solved by introducing 
restrictions about cardinality in Max-Cut problem.

In this paper a multidimensional generalization of maximum bisection problem is introduced, where weights on edges 
instead of numbers are $n$-tuples of positive real numbers. The weight of the cut is the minimum
of sums of  the  coordinates of edge weights. The goal is to find a partition of the set of vertices $V$
in two sets with equal number of vertices and maximal weight of the cut. For $n=1$ we have an ordinary maximum bisection problem. From the fact that maximum bisection problem is $NP$ hard, and that the maximum bisection problem is a special case of the multidimensional maximum bisection problem it follows that multidimensional maximum bisection problem is also $NP$ hard.

The weight of the cut in the multidimensional maximum bisection problem is calculated in two steps: firstly, the coordinates of the weight vectors on the edges of the cut are summed and secondly, the minimum of the sums is determined. This minimum is the weight of the cut. As it can be seen it is more complex than just summing the weights on the edges of the cut, which is the case in the MBP.

Although MMBP is a straightforward generalization of the MBP, most of the existing methods for solving the MBP can hardly be applied to the MMBP.

The semidefinite mathematical  formulation for the MBP cannot be easily transformed to the one for the MMBP. In the MBP semidefinite formulations, the weights of the edges directly  figure in the objective function and they are treated as numbers. In the MMBP,  on each edge a vector of the weights is assigned and we are not interested only in the coordinates, but in the minimum of their sums. This reason makes approximation algorithms based on semidefinite programming presented in previous discussion, notably   
in \cite{frieze}, \cite{goemans},
\cite{zwick}, \cite{karish} and \cite{ye}, inapplicable for solving MMBP. 

Brunch and cut methods based on linear and semidefinite formulations presented in \cite{armb} cannot be applied for several reasons. 
In order to generate the cycles the authors in \cite{armb} introduced the additional edges with the weights equal to zero.

If this method is expanded in the multidimensional variant by introducing the additional edges, having the vectors of the weights of all zeros, a problem will appear: these new edges will be favored in the cut, since the minimum of the sums of the coordinates has to be determined. Also, it is not easy to reformulate the objective function where weight of edges are used.
Since the semidefinite programming formulation is very similar to the one used by the approximation algorithms, the same consideration presented in the previous paragraph can  also be applied in this case.

The method described in \cite{rendl} requires solving the basic semidefinite programming relaxation of the max cut problem
which is case of MMBP cannot be applied  in the case of MMBP, because in MMBP the weights are represented as vectors.

In a proposed memetic search approach presented in \cite{wu} each individual in a population presents a bisection cut. 
If this approach is applied to the MMBP, the calculation of the fitness function could be pretty complicated. Nevertheless, if this approach is sill applied for solving MMBP, the calculation efforts in terms of time will be enormous.

The variable neighborhood search approach proposed in \cite{ling} combines the constraint $e^T x =0$ with the objective function. In the case of MMBP, this approach is not applicable, because the weights are now vectors and the constraint eTx=0 can not be fitted with the objective function.
Also, the greedy local search with a sorting procedure cannot be applied in the case of MMBP since it is unclear which coordinate should be sorted.

Proposed Lagrangian net algorithm in \cite{fengmin} cannot be easily applied on solution
of MMBP. First of all, penalty functions will have to be modified in order to reflect the fact that 
weights are now vectors. Second, the convergence to optimal solution could not be easily translated 
in a such space where weight of edges are vectors instead of numbers.

A deterministic annealing algorithm from [19] can not be easilly applied, since it is not clear what "a feasible descent direction" means in the case of MMBP because the weights are now vectors. Also, the convergence to an integral local minimum also can not be guaranteed in the case of MMBP.

Like many other graph partitioning problems, MBP is applied for solving various practical problems, such as VLSI design \cite{slov}, image processing \cite{shij},  compiler optimization \cite{hand}, social network analysis etc. 
Multidimensional maximum bisection problem appears whenever relation between entities are vectors of numbers instead of single numbers.
Some practical application are:

- For arbitrary pair of workers can be established several aspects of incompatibility. That aspect could be character, knowledge, experience, etc. where the higher level of incompatibility is represented with greater numbers. The problem is to divide the group of workers in two teams with equal size where the greatest part of incompatibility among workers lies between teams. 

- In VLSI design electrical components also have certain aspects that might be considered such as interference, current used, interconnectedness, heat dissipation etc. The problem is to designate electrical components to one of the two boards in such way that, for example, the two warmest components are on the different boards.

\section{Mixed integer solution for the multidimensional maximum bisection problem}

Before the MILP formulation, the formal mathematical formulation of the problem is given. 
Let $G=(V,E)$ be an undirected graph, and $w$ is a function that assigns to 
each edge $e=\{i,j\}$ a $k$-tuple of positive real numbers $(w_{e1},w_{e2},\ldots,w_{ek})$ and $S\subseteq V$. 
The cut $C(S)$ determined by the set $S$ is defined as \\
$$C(S)=\left\{
e\in  E|e\cap S\ne \emptyset \land e \cap (V\setminus S)\ne \emptyset)
\right\}.$$
From the definition, it is obvious that the cuts $C(S)$ and $C(V\setminus S)$ are the same sets.

The weight of the cut  
is defined as 
$$w(C(S))=\min\limits_{1\le l\le k} \sum\limits_{\{i,j\}\in C(S)} w_{ijl}.$$ 
The goal of the multidimensional maximum bisection problem is to 
find a partition of the set of vertices in two sets $S$ and $V\setminus S$ where $|S|=|V\setminus S|$ and where the weight of the cut 
$w(C(S)$ is maximal. 

The multidimensional maximum bisection problem can be illustrated by the example given on the  Figure 1, 
which optimal solution is given with the set $S=\{1,3,5\}$.
The set $S$ generates the cut $C(S)=\{\{1,2\},\{1,4\},\{1,6\},$  $\{2,3\}, \{3,4\},$ $\{3,6\},$ $\{4,5\},$ $\{5,6\}\}$
where the sums over coordinates are $(18,23)$ and the weight of the cut is
$18$.

\begin{figure}[h]
	\centering   
		\includegraphics[width=8cm]{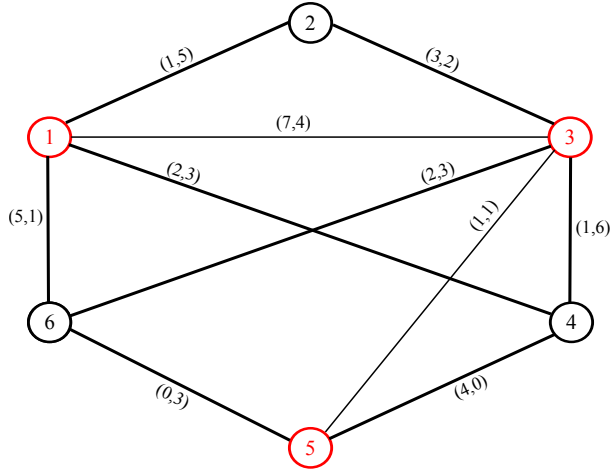}
		\caption{An example of a graph with pairs as weights over the edges}
	\label{fig:example}
\end{figure}

Let $S\subseteq V$, $|V|=n$, $k$ be dimension of weight vector and $w_{el}$ be the $l$-th coordinate of the weight vector for the edge $e$.
\begin{equation*}
\label{eqx}
x_{i}=\left\{
\begin{array}{rl}
	1, &  i\in S\\
	0, & i \notin S,	
\end{array}
\right.\hfill i\in V
\end{equation*}

\begin{equation*}
\label{eqy}
y_e=\left\{
\begin{array}{rl}
	1, & {\rm\ if\ edge\ } e \in C(S)\\
	0, & {\rm\ otherwise,}
\end{array}
\right.\hfill  e\in E
\end{equation*}

The exact solution of the multidimensional maximum bisection problem using mixed integer linear programming 
can be stated as:

\begin{equation}
\label{goal}
\max U
\end{equation}

such that

\begin{equation}
\label{og1}
U \le \sum\limits_{e\in E}w_{el}\cdot y_e,
\ \ \    1\le l\le k
\end{equation}

\begin{equation}
\label{og2}
x_{e_i}+x_{e_j} \ge y_e,
\ \ \    \{e_i,e_j\}=e\in E
\end{equation}

\begin{equation}
\label{og3}
x_{e_i}+x_{e_j}  + y_e\le 2,
\ \ \    \{e_i,e_j\}=e\in E
\end{equation}

\begin{equation}
\label{og4}
\sum\limits_{i=1}^{n} x_i = n/2,
\end{equation}

\begin{equation}
\label{og5}
x_i, y_e \in \{0,1\}, i\in V, e\in E
\end{equation}

\begin{equation}
\label{og6}
U\in[0,+\infty)
\end{equation}

\newtheorem{theorem}{Theorem}

\begin{theorem}
A partition of set of vertices of a given graph $G=(V,E)$ in two sets $S$ and $V\setminus S$ is the solution of the generalized Max-Bisection 
problem if and only if constraints (\ref{og1})-(\ref{og6}) and objective function are satisfied.
\end{theorem}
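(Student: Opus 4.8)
The plan is to show that the feasible set and objective of the mixed integer program encode the MMBP exactly, so that the optimal value of \eqref{goal} equals $\max_{|S|=n/2} w(C(S))$ and the optimizers correspond to one another; I would establish this through a correspondence between feasible integer points and balanced partitions, followed by a matching of objective values. First I would read off what each group of constraints enforces. Constraint \eqref{og4} forces $\sum_i x_i = n/2$, so the set $S=\{i\in V : x_i=1\}$ is a balanced partition. The role of \eqref{og2} and \eqref{og3} is clarified by a case analysis on an edge $e=\{e_i,e_j\}$: if both endpoints lie on the same side then $x_{e_i}=x_{e_j}$, and \eqref{og2} forces $y_e=0$ when both equal $0$ while \eqref{og3} forces $y_e=0$ when both equal $1$; if the endpoints lie on opposite sides then exactly one of $x_{e_i},x_{e_j}$ equals $1$, and both \eqref{og2} and \eqref{og3} hold for either $y_e\in\{0,1\}$. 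Hence these two inequalities pin down $y_e=0$ precisely on the non-cut edges, while leaving $y_e$ free on the edges of $C(S)$.

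The key step is the optimization over this remaining freedom. Fix a balanced partition $S$. Because every weight $w_{el}$ is a positive real number, each sum $\sum_{e\in E} w_{el}\,y_e$ is strictly increasing in every variable $y_e$; raising a single cut-edge variable from $0$ to $1$ increases all $k$ coordinate sums simultaneously and therefore cannot decrease their minimum. Consequently any solution maximizing $U$ must set $y_e=1$ for every $e\in C(S)$. With this assignment $\sum_{e\in E}w_{el}\,y_e=\sum_{e\in C(S)}w_{el}$ for each $l$, and the largest $U$ admissible under \eqref{og1} and \eqref{og6} is $U=\min_{1\le l\le k}\sum_{e\in C(S)}w_{el}=w(C(S))$.

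It remains to combine the two steps and match optimizers. Taking the maximum over all balanced partitions shows that the optimal value of the program equals $\max_{|S|=n/2} w(C(S))$, the MMBP optimum. For the forward implication, an optimal integer solution $(x,y,U)$ yields $S=\{i:x_i=1\}$ with $|S|=n/2$ and $w(C(S))=U$ maximal. For the converse, an optimal MMBP partition $S^\ast$ produces a feasible point by setting $x_i=1$ iff $i\in S^\ast$, $y_e=1$ iff $e\in C(S^\ast)$, and $U=w(C(S^\ast))$, attaining the same objective value.

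The main obstacle is that \eqref{og2} and \eqref{og3} do not by themselves force $y_e=1$ on the cut edges; they only forbid $y_e=1$ off the cut. The forcing must instead come from the objective, and this is exactly where strict positivity of the weight vectors does the work: it guarantees that raising any cut-edge variable simultaneously increases every coordinate sum, so the minimum can only grow. I would state this monotonicity argument explicitly, since it is the one place where the min-of-positive-sums structure of the MMBP objective is essential and where the formulation would fail were the weights allowed to be zero or negative.
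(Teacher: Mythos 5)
Your proof is correct and establishes the same correspondence the paper intends, but it is organized differently and is more careful at the one genuinely delicate point. The paper proves the two implications separately: the forward direction is a routine constraint check, and in the backward direction it handles the case $y_e=0$ on a cut edge by asserting that $U$ would then ``not be maximal, which contradicts the supposition that all constraints are fulfilled'' --- an argument that silently invokes both optimality of the objective and the strict positivity of the weights without isolating either. Your version instead first characterizes the feasible set exactly (the case analysis showing that (\ref{og2})--(\ref{og3}) force $y_e=0$ precisely off the cut while leaving $y_e$ free on $C(S)$ is correct and sharper than anything stated in the paper), and then supplies the missing forcing step as an explicit monotonicity lemma: because every $w_{el}$ is strictly positive, raising any cut-edge $y_e$ strictly increases all $k$ coordinate sums and hence their minimum, so every optimal solution must set $y_e=1$ on the whole cut and the inner maximum of $U$ for fixed $x$ equals $w(C(S))$. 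This two-stage decomposition (inner optimization over $y$ and $U$ for a fixed balanced partition, outer optimization over partitions) yields the equality of optimal values and the bijection of optimizers in one pass, and it pinpoints exactly where the formulation would fail if zero or negative weights were permitted --- the same issue the paper raises informally in the introduction when explaining why the zero-weight spanning-star device of the branch-and-cut literature does not carry over to the multidimensional setting. The paper's proof is shorter but leaves this crucial step implicit; yours makes the formulation's reliance on positivity a visible hypothesis.
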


\begin{proof}
($\Rightarrow$)
Suppose that $S$ is an optimal solution and its corresponding cut is $C(S)$. It will be proved that constraints  ($\ref{og1}$)-($\ref{og6}$) are fulfilled.

Based on the definition of weight of the cut, the constraint ($\ref{og1}$) is true, and based on the goal of the multidimensional maximum bisection problem, (\ref{goal}) also holds.

If $y_e=0$  than ($\ref{og2}$) and ($\ref{og3}$) are obviously true.
If $y_e=1$  than the corresponding edge $e=\{i,j\}$ belongs to the cut, and exactly one vertex incident to the edge $e$ must be in the set $S$, 
so either $x_{e_i}=1$ or $x_{e_i}=1$ and therefore constraints ($\ref{og2}$) and ($\ref{og3}$) holds.

The constraint ($\ref{og4}$) is obviously fulfilled as it is required that the vertex set is partitioned into two set with the equal number of vertices,
and the constraints ($\ref{og5}$) and ($\ref{og5}$) are fulfilled by the definition and the fact that maximum of the cut is to be found. 

($\Leftarrow$) Suppose that objective and constraints are satisfied. The partition of $V$ into two sets $S$ and
$V\setminus S$ is determined by the set 
$S=\{i\in V | x_i=1\}$, where the cut is $C(S)=\{e\in E|y_e=1\}$.

From the constraint ($\ref{og1}$) it follows that 
$$U\le\min\limits_{1\le l\le k} \sum\limits_{\substack{\{i,j\}\in E\\ i\in S, j\notin S}} w_{ijl},$$
meaning that $U\le w(C(S))$ and it follows from the objective function that $U$ is equal to the greatest weight of the cut.

From the constraint ($\ref{og5}$) $y_e$ is either 0 or 1.

If $y_e=1$ then from the constraints ($\ref{og2}$)  and ($\ref{og3}$) it follows that both vertices of the edge $e$ are not in the same set $S$ nor set $V\setminus S$.  

If $y_e=0$ then from the constraints ($\ref{goal}$)-($\ref{og3}$) it follows that both vertices of the edge $e$ must be in the same partitions set (either $S$ or $V\setminus S$). If vertices are in different partitions, than it can be concluded that the weight of the edge $e$ is not included in the weight of the cut, and therefore $U$ is not maximal which contradicts to the supposition that all constraints are fulfilled. From this it follows that vertices of the edge must be in the same partition.  

From the constraint ($\ref{og4}$) it follows that $|S|=n/2=|V\setminus S|$ which means 
that the vertex set is partitioned into two sets with the equal number of vertices.

From the constraints ($\ref{og5}$) it follows that each vector must be in either $S$ or in $V\setminus S$. The same applies for the edges.
\end{proof}

\section{Experimental results}

The experiments were conducted on an Intel Core i3 running on 1.7Ghz with 3GB RAM using CPLEX 12.4, Gurobi 5.6 and total enumeration. In order to run the experiments, a set of 27 random graphs was generated: graphs with 10 vertices and 15, 25 and 40 edges; graphs with 20 vertices and 30, 70 and 150 edges; graphs with 30 vertices and 50, 150 and 400 edges; graphs with 50 vertices and 80, 300 and 1000 edges; graphs with 100 vertices and 150, 500 and 3000 edges; graphs with 300 vertices and 500, 2000, 10000 and 30000 edges; graphs with 500 vertices and 1000, 3000, 10000 and 60000 edges and graphs with 1000 vertices and 1500, 10000, 100000 and 350000 edges.
For each edge of a random graph, a $20$-tuple is generated where each coordinate is a random number in the range $1.000 - 9.999$.  

The experiments were conducted using different vector dimensions: 1, 2, 3, 4, 5, 10, 15 and 20 of the same instances in order to confirm that the increase of the dimension of vectors over the edges  significantly complicate finding of the optimal solution. All tests were run with 7200 seconds time limit. Numerical results for instances where optimal solutions were found is shown in Tables $1 - 3$. In the Tables $4-6$ numerical results are shown for the instances where optimal solutions were not found. 

All tables have common first two columns. In the first column, denoted with {\em instance}, the names of instances are given in the format {\em XXX\_YYY} where {\em XXX} is the number of the vertices and {\em YYY} is the number of the edges. For example, the instance 030\_400 is a graph with $30$ vertices  and $400$ edges. In the second column, denoted with $k$, a vector dimension is given.  

In the Tables $1-3$ in the third column, denoted with {\em opt}, the optimal result is given. The subsequent two columns contain information about total enumeration: time when optimal solution is found ($t$) and total running time ($t_{tot}$). The last four columns contain information about CPLEX and Gurobi time needed for finding optimal solution and running time, denoted in the same manner. 

Third column of the Tables $4-6$ contains the maximum of the solutions found for each method (enumeration, CPLEX and Gurobi).  In the subsequent two columns the solution is given (denoted with sol) and the time needed for finding  that optimal solution (denoted with $t$) using total enumeration. The following four columns contain information about running CPLEX and Gurobi denoted in the same manner. 

As it can be seen in the table $2$ for instance 030\_400 and $k=5,10,15,20$, Gurobi didn't finish their work in 7200 seconds or it run out of memory as well as CPLEX for $k=10,15,20$. 

In the Tables $4-6$ neither of CPLEX, Gurobi and total enumeration complete finding the optimal solutions for the given 7200 seconds for smaller instances and for larger instances because insufficient amount of memory (instances with 1000 vertices).

Obviously, the number of vertices and edges has great influence on the particular solver performance. For example, for the instance 030\_050 with vector dimension 1 both CPLEX and Gurobi completed finding the optimal solution for less than one second, while for the instance 030\_400 it took 775.5 and 193.2 seconds respectively to find the optimal solution. The results, given in the Tables $1-6$, also indicates that the complexity highly increases with the increase of the vector dimension. For example, for the instance 040\_400 where the vector dimension $k$ is equal to 4, it took more than 5000 second for both CPLEX and Gurobi to find the solution.  

Tables $1-6$

\begin{table}[H] \footnotesize
	\caption{Instances with known optimal solutions}
	\centering
\begin{tabular}{ccc|cc|cc|cc}
\hline 
\multirow{2}{*}{{instance}} & \multirow{2}{*}{{$k$}} & \multirow{2}{*}{{opt}} & \multicolumn{2}{c|}{{enumeration}} & \multicolumn{2}{c|}{{CPLEX}} & \multicolumn{2}{c}{{Gurobi}}\tabularnewline
\cline{4-9} 
 &  &  & \multicolumn{1}{c}{$t$~(s)} & $t_{tot}$ (s) & $t$~(s) & $t_{tot}$ (s) & $t$~(s) & $t_{tot}$ (s)\tabularnewline
\hline 
{010\_015} & {1} & {68.708} & {0.001} & {0.001} & {0.051} & {0.155} & {0.030} & {0.030}\tabularnewline
{010\_015} & {2} & {59.971} & {0.001} & {0.001} & {0.046} & {0.062} & {0.030} & {0.030}\tabularnewline
{010\_015} & {3} & {54.324} & {0.001} & {0.001} & {0.066} & {0.088} & {0.040} & {0.040}\tabularnewline
{010\_015} & {4} & {54.324} & {0.001} & {0.001} & {0.091} & {0.121} & {0.040} & {0.040}\tabularnewline
{010\_015} & {5} & {54.324} & {0.001} & {0.001} & {0.085} & {0.212} & {0.050} & {0.050}\tabularnewline
{010\_015} & {10} & {49.011} & {0.001} & {0.001} & {0.263} & {0.269} & {0.050} & {0.050}\tabularnewline
{010\_015} & {15} & {49.011} & {0.001} & {0.001} & {0.134} & {0.141} & {0.050} & {0.050}\tabularnewline
{010\_015} & {20} & {47.347} & {0.001} & {0.001} & {0.165} & {0.239} & {0.060} & {0.060}\tabularnewline
\hline 
{010\_025} & {1} & {82.500} & {0.001} & {0.001} & {0.001} & {0.265} & {0.030} & {0.030}\tabularnewline
{010\_025} & {2} & {82.500} & {0.001} & {0.001} & {0.183} & {0.187} & {0.050} & {0.050}\tabularnewline
{010\_025} & {3} & {82.500} & {0.001} & {0.001} & {0.236} & {0.247} & {0.030} & {0.030}\tabularnewline
{010\_025} & {4} & {82.500} & {0.001} & {0.001} & {0.240} & {0.245} & {0.030} & {0.030}\tabularnewline
{010\_025} & {5} & {82.500} & {0.001} & {0.001} & {0.357} & {0.374} & {0.070} & {0.070}\tabularnewline
{010\_025} & {10} & {79.842} & {0.001} & {0.001} & {0.196} & {0.204} & {0.070} & {0.070}\tabularnewline
{010\_025} & {15} & {79.842} & {0.001} & {0.001} & {0.199} & {0.243} & {0.070} & {0.070}\tabularnewline
{010\_025} & {20} & {79.842} & {0.001} & {0.001} & {0.256} & {0.263} & {0.100} & {0.100}\tabularnewline
\hline 
{010\_040} & {1} & {109.743} & {0.001} & {0.002} & {0.185} & {0.403} & {0.060} & {0.060}\tabularnewline
{010\_040} & {2} & {109.743} & {0.001} & {0.001} & {0.295} & {0.312} & {0.070} & {0.070}\tabularnewline
{010\_040} & {3} & {109.743} & {0.001} & {0.001} & {0.381} & {0.391} & {0.110} & {0.110}\tabularnewline
{010\_040} & {4} & {109.743} & {0.001} & {0.001} & {0.377} & {0.386} & {0.080} & {0.080}\tabularnewline
{010\_040} & {5} & {109.743} & {0.001} & {0.001} & {0.450} & {0.493} & {0.110} & {0.110}\tabularnewline
{010\_040} & {10} & {109.743} & {0.001} & {0.001} & {0.525} & {0.538} & {0.120} & {0.120}\tabularnewline
{010\_040} & {15} & {109.743} & {0.001} & {0.001} & {0.415} & {0.427} & {0.110} & {0.110}\tabularnewline
{010\_040} & {20} & {108.651} & {0.001} & {0.001} & {0.486} & {0.500} & {0.160} & {0.160}\tabularnewline
\hline 
{020\_030} & {1} & {136.696} & {0.016} & {0.078} & {0.068} & {0.227} & {0.030} & {0.030}\tabularnewline
{020\_030} & {2} & {122.664} & {0.031} & {0.078} & {0.105} & {0.109} & {0.040} & {0.040}\tabularnewline
{020\_030} & {3} & {122.664} & {0.031} & {0.078} & {0.047} & {0.194} & {0.040} & {0.040}\tabularnewline
{020\_030} & {4} & {107.134} & {0.015} & {0.093} & {0.214} & {0.220} & {0.050} & {0.050}\tabularnewline
{020\_030} & {5} & {105.441} & {0.046} & {0.109} & {0.267} & {0.276} & {0.090} & {0.090}\tabularnewline
{020\_030} & {10} & {105.441} & {0.062} & {0.124} & {0.290} & {0.298} & {0.110} & {0.110}\tabularnewline
{020\_030} & {15} & {105.441} & {0.062} & {0.140} & {0.196} & {0.203} & {0.090} & {0.090}\tabularnewline
{020\_030} & {20} & {105.441} & {0.475} & {0.171} & {0.249} & {0.258} & {0.090} & {0.090}\tabularnewline
\hline 
{020\_070} & {1} & {250.973} & {0.016} & {0.156} & {0.318} & {0.620} & {0.120} & {0.120}\tabularnewline
{020\_070} & {2} & {250.973} & {0.001} & {0.171} & {0.425} & {0.437} & {0.160} & {0.160}\tabularnewline
{020\_070} & {3} & {246.420} & {0.001} & {0.187} & {0.559} & {0.583} & {0.190} & {0.190}\tabularnewline
{020\_070} & {4} & {246.420} & {0.001} & {0.203} & {0.571} & {0.588} & {0.180} & {0.180}\tabularnewline
{020\_070} & {5} & {246.420} & {0.015} & {0.218} & {0.639} & {0.659} & {0.190} & {0.190}\tabularnewline
{020\_070} & {10} & {244.988} & {0.001} & {0.265} & {0.704} & {0.725} & {0.240} & {0.240}\tabularnewline
{020\_070} & {15} & {244.988} & {0.015} & {0.312} & {0.578} & {0.619} & {0.270} & {0.270}\tabularnewline
{020\_070} & {20} & {244.988} & {0.015} & {0.374} & {0.808} & {0.835} & {0.250} & {0.250}\tabularnewline
\hline 
\end{tabular}
\end{table}

\begin{table}[h] \footnotesize
	\caption{Instances with known optimal solutions}
	\centering
\begin{tabular}{ccc|cc|cc|cc}
\hline 
\multirow{2}{*}{{instance}} & \multirow{2}{*}{{$k$}} & \multirow{2}{*}{{opt}} & \multicolumn{2}{c|}{{enumeration}} & \multicolumn{2}{c|}{{CPLEX}} & \multicolumn{2}{c}{{Gurobi}}\tabularnewline
\cline{4-9} 
 &  &  & $t$~(s) & $t_{tot}$ (s) & $t$~(s) & $t_{tot}$ (s) & $t$~(s) & $t_{tot}$ (s)\tabularnewline
\hline 
{020\_150} & {1} & {502.411} & {0.031} & {0.359} & {1.769} & {2.210} & {0.840} & {0.840}\tabularnewline
{020\_150} & {2} & {476.472} & {0.187} & {0.374} & {1.476} & {1.529} & {1.570} & {1.570}\tabularnewline
{020\_150} & {3} & {466.079} & {0.187} & {0.421} & {1.193} & {2.505} & {2.670} & {1.000}\tabularnewline
{020\_150} & {4} & {466.079} & {0.202} & {0.421} & {2.201} & {2.832} & {2.350} & {2.350}\tabularnewline
{020\_150} & {5} & {455.679} & {0.171} & {0.452} & {4.213} & {5.076} & {2} & {6.760}\tabularnewline
{020\_150} & {10} & {442.486} & {0.202} & {0.561} & {1.912} & {3.387} & {1} & {2.190}\tabularnewline
{020\_150} & {15} & {440.870} & {0.234} & {0.671} & {2.353} & {2.963} & {1} & {5.870}\tabularnewline
{020\_150} & {20} & {440.870} & {0.296} & {0.811} & {3.751} & {3.948} & {1} & {3.070}\tabularnewline
\hline 
{030\_050} & {1} & {224.556} & {47.41} & {117.6} & {0.194} & {0.194} & {0.050} & {0.050}\tabularnewline
{030\_050} & {2} & {224.556} & {49.82} & {123.6} & {0.134} & {0.141} & {0.080} & {0.080}\tabularnewline
{030\_050} & {3} & {221.761} & {54.74} & {136.3} & {0.183} & {0.193} & {0.110} & {0.110}\tabularnewline
{030\_050} & {4} & {221.761} & {54.24} & {145.1} & {0.164} & {0.173} & {0.090} & {0.090}\tabularnewline
{030\_050} & {5} & {214.010} & {158.1} & {151.5} & {0.496} & {0.511} & {0.110} & {0.110}\tabularnewline
{030\_050} & {10} & {207.983} & {86.80} & {198.9} & {0.382} & {0.396} & {0.170} & {0.170}\tabularnewline
{030\_050} & {15} & {199.817} & {92.96} & {233.2} & {0.324} & {0.337} & {0.170} & {0.170}\tabularnewline
{030\_050} & {20} & {199.817} & {111.7} & {279.3} & {0.472} & {0.488} & {0.160} & {0.160}\tabularnewline
\hline 
{030\_150} & {1} & {589.593} & {118.8} & {345.4} & {1.610} & {1.676} & {0.940} & {0.940}\tabularnewline
{030\_150} & {2} & {536.473} & {130.9} & {374.0} & {1.230} & {1.264} & {1.050} & {1.000}\tabularnewline
{030\_150} & {3} & {533.635} & {141.8} & {404.8} & {3.122} & {3.223} & {1.590} & {1.000}\tabularnewline
{030\_150} & {4} & {533.635} & {150.8} & {433.5} & {1.608} & {2.500} & {1.480} & {1.000}\tabularnewline
{030\_150} & {5} & {533.635} & {156.8} & {449.4} & {2.635} & {2.708} & {1.270} & {1.270}\tabularnewline
{030\_150} & {10} & {525.300} & {89.9} & {573.0} & {3.150} & {3.238} & {2} & {3.490}\tabularnewline
{030\_150} & {15} & {519.586} & {106.5} & {936.8} & {3.437} & {4.837} & {1} & {4.470}\tabularnewline
{030\_150} & {20} & {519.586} & {216.7} & {1375} & {3.969} & {4.641} & {1} & {3.120 }\tabularnewline
\hline 
{030\_400} & {1} & {1331.773} & {391.5} & {1327} & {15.08} & {775.5} & {59} & {193.2}\tabularnewline
{030\_400} & {2} & {1230.856} & {539.2} & {1430} & {26.87} & {1928} & {167} & {1249}\tabularnewline
{030\_400} & {3} & {1208.703} & {125.0} & {1416} & {498.5} & {4033} & {2293} & {2889}\tabularnewline
{030\_400} & {4} & {1187.753} & {530.8} & {1520} & {5138} & {5382} & {4389} & {5527}\tabularnewline
{030\_400} & {5} & {1181.987} & {45.5} & {1602} & {5521} & {5563} & {-} & {-}\tabularnewline
{030\_400} & {10} & {1154.353} & {593.1} & {2116} & {-} & {-} & {-} & {-}\tabularnewline
{030\_400} & {15} & {1142.317 } & {519.2} & {2611} & {-} & {-} & {-} & {-}\tabularnewline
{030\_400} & {20} & {1140.635 } & {605.3} & {3098} & {-} & {-} & {-} & {-}\tabularnewline
\hline 
\end{tabular}
\end{table}

\begin{table}[h] \footnotesize
	\caption{Instances with known optimal solutions}
	\centering
\begin{tabular}{ccc|cc|cc|cc}
\hline 
\multirow{2}{*}{{instance}} & \multirow{2}{*}{{$k$}} & \multirow{2}{*}{{opt}} & \multicolumn{2}{c|}{{enumeration}} & \multicolumn{2}{c|}{{CPLEX}} & \multicolumn{2}{c}{{Gurobi}}\tabularnewline
\cline{4-9} 
 &  &  & $t$~(s) & \multicolumn{1}{c}{$t_{tot}$ (s)} & $t$~(s) & $t_{tot}$ (s) & $t$~(s) & $t_{tot}$ (s)\tabularnewline
\hline 
{050\_080} & {1} & {372.069} & {-} & {-} & {0.396} & {0.407} & {0150} & {0150}\tabularnewline
{050\_080} & {2} & {346.271} & {-} & {-} & {0.475} & {0.496} & {0150} & {0150}\tabularnewline
{050\_080} & {3} & {346.271} & {-} & {-} & {0.491} & {0.513} & {0.130} & {0.130}\tabularnewline
{050\_080} & {4} & {346.271} & {-} & {-} & {0.446} & {0.465} & {0.180} & {0.180}\tabularnewline
{050\_080} & {5} & {333.258} & {-} & {-} & {0.697} & {0.720} & {0.200} & {0.200}\tabularnewline
{050\_080} & {10} & {333.258} & {-} & {-} & {0.615} & {0.632} & {0.250} & {0.250}\tabularnewline
{050\_080} & {15} & {332.970} & {-} & {-} & {0.575} & {0.634} & {0.250} & {0.250}\tabularnewline
{050\_080} & {20} & {332.970} & {-} & {-} & {0.882} & {0.911} & {0.250} & {0.250}\tabularnewline
\hline 
{050\_300} & {1} & {1124.331} & {-} & {-} & {7.828} & {23.89} & {15} & {73.58}\tabularnewline
{050\_300} & {2} & {1098.891} & {-} & - & {5.292} & {22.85} & {46} & {60.92}\tabularnewline
{050\_300} & {3} & {1098.891} & {-} & - & {12.82} & {31.82} & {7} & {60.87}\tabularnewline
{050\_300} & {4} & {1094.621} & {-} & - & {51.75} & {117.4} & {21} & {74.23}\tabularnewline
{050\_300} & {5} & {1094.621} & {-} & - & {9.152} & {27.05} & {37} & {89.12}\tabularnewline
{050\_300} & {10} & {1076.105} & {-} & - & {17.71} & {32.11} & {7} & {75.70}\tabularnewline
{050\_300} & {15} & {1062.553} & {-} & - & {29.32} & {295.3} & {45} & {107.30}\tabularnewline
{050\_300} & {20} & {1062.553} & {-} & - & {63.49} & {236.9} & {89} & {163.70}\tabularnewline
\hline 
{100\_150} & {1} & {697.973} & {-} & - & {0.586} & {0.612} & {0.350} & {0.350}\tabularnewline
{100\_150} & {2} & {696.787} & {-} & - & {0.750} & {1.106} & {0.520} & {0.520}\tabularnewline
{100\_150} & {3} & {689.403} & - & - & {1.082} & {1.124} & {0.690} & {0.690}\tabularnewline
{100\_150} & {4} & {689.403} & {-} & - & {0.843} & {1.126} & {0.650} & {0.650}\tabularnewline
{100\_150} & {5} & {689.403} & - & - & {1.425} & {1.425} & {1.090} & {1.090}\tabularnewline
{100\_150} & {10} & {686.703} & - & - & {1.873} & {1.945} & {1.000} & {1.000}\tabularnewline
{100\_150} & {15} & {673.056} & - & - & {1.343} & {1.343} & {1.570} & {1.570}\tabularnewline
{100\_150} & {20} & {655.263} & - & - & {1.678} & {2.410} & {0.930} & {0.930}\tabularnewline
\hline 
{300\_500} & {1} & {2543.862} & - & - & {13.99} & {52.52} & {106} & {127.40}\tabularnewline
{300\_500} & {2} & {2524.215} & - & - & {263.5} & {265.9} & {55} & {279.66}\tabularnewline
{300\_500} & {3} & {2456.085} & - & - & {102.5} & {3455} & {450} & {460.41}\tabularnewline
{300\_500} & {4} & {2456.085} & - & - & {382.3} & {1264} & {151} & {961.72}\tabularnewline
{300\_500} & {5} & {2448.516} & - & - & {116.7} & {1939} & {167} & {752.36}\tabularnewline
{300\_500} & {10} & {2377.530} & - & - & {280.0} & {537.5} & {29} & {287.21}\tabularnewline
{300\_500} & {15} & {2360.732} & - & - & {591.4} & {636.5} & {67} & {654.83}\tabularnewline
{300\_500} & {20} & {2352.357} & - & - & {40.98} & {1766} & {101} & {692.72}\tabularnewline
\hline 
{100\_500} & {1} & {1986.131} & - & - & - & - & {300} & {1495}\tabularnewline
{100\_500} & {2} & {1931.452} & - & - & {292.8} & {4481} & {45} & {1514}\tabularnewline
{100\_500} & {3} & {1901.654} & - & - & {-} & {-} & {1104} & {1998}\tabularnewline
{100\_500} & {4} & {1901.654} & - & - & {-} & {-} & {1435} & {2842}\tabularnewline
{100\_500} & {5} & {1876.418} & - & - & {-} & {-} & {675} & {3902}\tabularnewline
\end{tabular}
\end{table}

\begin{table}[h]\footnotesize
\caption{Instances with unknown optimal solutions}
	\centering
\begin{tabular}{ccc|cc|cc|cc}
\hline 
\multirow{2}{*}{{instance}} & \multirow{2}{*}{{$k$}} & \multirow{2}{*}{{best}} & \multicolumn{2}{c|}{{enumeration}} & \multicolumn{2}{c|}{{CPLEX}} & \multicolumn{2}{c}{{Gurobi}}\tabularnewline
\cline{4-9} 
 &  &  & {sol} & {$t$ (s)} & {sol} & {$t$ (s)} & {sol} & {$t$ (s)}\tabularnewline
\hline 
{050\_1000} & {1} & {3099.083} & {2914.536} & {5125} & {3089.713} & {1531} & {3099.083} & {2404}\tabularnewline
{050\_1000} & {2} & {3016.320} & {2874.113} & {1368} & {3016.320} & {5257} & {3013.999} & {1354}\tabularnewline
{050\_1000} & {3} & {2977.524} & {2861.193} & {2159} & {2977.524} & {650.3} & {2973.205} & {6592}\tabularnewline
{050\_1000} & {4} & {2930.936} & {2826.714} & {2475} & {2930.936} & {3794} & {2928.399} & {2284}\tabularnewline
{050\_1000} & {5} & {2915.139} & {2821.667} & {2586} & {2915.139} & {1636} & {2909.632} & {3419}\tabularnewline
{050\_1000} & {10} & {2890.945} & {2775.508} & {295.9} & {2890.945} & {898.6} & {2827.708} & {1658}\tabularnewline
{050\_1000} & {15} & {2853.382} & {2762.585} & {3324} & {2853.382} & {1401} & {2833.230} & {4129}\tabularnewline
{050\_1000} & {20} & {2840.953} & {2762.585} & {4042} & {2840.953} & {2392} & {2827.708} & {1658}\tabularnewline
\hline 
{100\_500} & {10} & {1852.608} & {1522.286} & {3768} & {1843.131} & {1572} & {1852.608} & {5504}\tabularnewline
{100\_500} & {15} & {1834.379} & {1480.356} & {4748} & {1834.379} & {223.3} & {1834.379} & {392}\tabularnewline
{100\_500} & {20} & {1809.037} & {1480.356} & {5857} & {1809.037} & {607.1} & {1809.037} & {4827}\tabularnewline
\hline 
{100\_3000} & {1} & {8874.360} & {8505.179} & {5089} & {8874.360} & {6230} & {8864.220} & {2720}\tabularnewline
{100\_3000} & {2} & {8718.983} & {8436.693} & {5774} & {8718.983} & {5946} & {8674.793} & {2500}\tabularnewline
{100\_3000} & {3} & {8718.983} & {8386.485} & {5289} & {8656.097} & {7086} & {8718.983} & {3267}\tabularnewline
{100\_3000} & {4} & {8709.104} & {8386.485} & {5692} & {8709.104} & {421.0} & {8699.536} & {30}\tabularnewline
{100\_3000} & {5} & {8698.801} & {8296.34} & {5703} & {8583.056} & {219.3} & {8698.801} & {3170}\tabularnewline
{100\_3000} & {10} & {8709.173} & {8171.752} & {505.9} & {8440.079} & {142.0} & {8709.173} & {4914}\tabularnewline
{100\_3000} & {15} & {8402.444} & {8117.397} & {778.9} & {8402.444} & {267.7} & {8375.000} & {3770}\tabularnewline
{100\_3000} & {20} & {8521.629} & {8048.049} & {4098} & {8464.797} & {266.4} & {8521.629} & {4894}\tabularnewline
\end{tabular}\end{table}

\begin{table}[h]\footnotesize
\caption{Instances with unknown optimal solutions}
	\centering
\begin{tabular}{ccc|cc|cc|cc}
\hline 
\multirow{2}{*}{{instance}} & \multirow{2}{*}{{$k$}} & \multirow{2}{*}{{best}} & \multicolumn{2}{c|}{{enumeration}} & \multicolumn{2}{c|}{{CPLEX}} & \multicolumn{2}{c}{{Gurobi}}\tabularnewline
\cline{4-9} 
 &  &  & {sol} & {$t$ (s)} & {sol} & {$t$ (s)} & {sol} & {$t$ (s)}\tabularnewline
\hline 
{300\_2k} & {1} & {7709.498} & {5744.132} & {6999} & {7690.407} & {6322} & {7709.498} & {6637}\tabularnewline
{300\_2k} & {2} & {7645.349} & {5742.752} & {2774} & {7645.349} & {6487} & {7414.360} & {5032}\tabularnewline
{300\_2k} & {3} & {7509.420} & {5740.743} & {3139} & {7509.420} & {7197} & {7242.234} & {1234}\tabularnewline
{300\_2k} & {4} & {7442.000} & {5739.065} & {4386} & {7442.000} & {6745} & {7402.851} & {1412}\tabularnewline
{300\_2k} & {5} & {7389.467} & {5733.544} & {4731} & {7389.467} & {6871} & {7299.509} & {887}\tabularnewline
{300\_2k} & {10} & {7416.068} & {5733.544} & {7099} & {7416.068} & {7097} & {7168.137} & {2206}\tabularnewline
{300\_2k} & {15} & {7307.361} & {5679.214} & {198.9} & {7307.361} & {6595} & {7291.780} & {4513}\tabularnewline
{300\_2k} & {20} & {7345.519} & {5669.931} & {7136} & {7345.519} & {6059} & {7147.080} & {3663}\tabularnewline
\hline 
{300\_10k} & {1} & {30824.693} & {27486.982} & {2801} & {30824.693} & {0.353} & {29951.782} & {33}\tabularnewline
{300\_10k} & {2} & {29659.475} & {27486.982} & {3249} & {29771.046} & {4288} & {29659.475} & {539}\tabularnewline
{300\_10k} & {3} & {30759.960} & {27472.383} & {3818} & {29819.923} & {6494} & {30759.960} & {336}\tabularnewline
{300\_10k} & {4} & {30989.566} & {27472.383} & {4302} & {29359.477} & {3087} & {30989.566} & {188}\tabularnewline
{300\_10k} & {5} & {30872.004} & {27246.909} & {1004} & {29682.394} & {4164} & {30872.004} & {80}\tabularnewline
{300\_10k} & {10} & {30968.334} & {27246.649} & {931.3} & {29253.722} & {6862} & {30968.334} & {747}\tabularnewline
{300\_10k} & {15} & {30748.223} & {27183.126} & {2194} & {29241.136} & {5092} & {30748.223} & {1299}\tabularnewline
{300\_10k} & {20} & {30923.953} & {27183.126} & {2729} & {29498.737} & {4891} & {30923.953} & {48}\tabularnewline
\hline 
{300\_30k} & {1} & {85934.244} & {83070.887} & {6456} & {85934.244} & {1.225} & {84451.384} & {3}\tabularnewline
{300\_30k} & {2} & {83903.267} & {83028.62} & {6874} & {66065.103} & {5760} & {83903.267} & {7126}\tabularnewline
{300\_30k} & {3} & {84952.867} & {83021.191} & {5183} & {66814.125} & {5575} & {84952.867} & {1573}\tabularnewline
{300\_30k} & {4} & {85143.297} & {83021.191} & {5804} & {66514.511} & {5641} & {85143.297} & {4027}\tabularnewline
{300\_30k} & {5} & {85845.606} & {83028.62} & {6141} & {66007.823} & {6611} & {85845.606} & {3331}\tabularnewline
{300\_30k} & {10} & {85691.919} & {82945.149} & {3427} & {65971.340} & {5955} & {85691.919} & {2977}\tabularnewline
{300\_30k} & {15} & {85894.230} & {82768.105} & {6617} & {66204.979} & {6213} & {85894.230} & {2572}\tabularnewline
{300\_30k} & {20} & {85715.008} & {82735.728} & {6759} & {65252.157} & {7072} & {85715.008} & {3401}\tabularnewline
\hline 
{500\_1k} & {1} & {4744.994} & {2806.965} & {5913} & {4740.014} & {4102} & {4744.994} & {2801}\tabularnewline
{500\_1k} & {2} & {4739.982} & {2806.965} & {7054} & {4739.982} & {6815} & {4739.529} & {6620}\tabularnewline
{500\_1k} & {3} & {4695.510} & {2762.100} & {1590} & {4695.510} & {789.2} & {4692.877} & {4350}\tabularnewline
{500\_1k} & {4} & {4688.866} & {2762.100} & {1239} & {4688.866} & {5575} & {4680.997} & {6234}\tabularnewline
{500\_1k} & {5} & {4687.445} & {2762.100} & {1933} & {4680.664} & {2883} & {4687.445} & {2885}\tabularnewline
{500\_1k} & {10} & {4636.442} & {2762.100} & {2995} & {4636.442} & {2509} & {4623.145} & {5832}\tabularnewline
{500\_1k} & {15} & {4622.783} & {2762.100} & {3794} & {4622.783} & {6846} & {4619.915} & {5316}\tabularnewline
{500\_1k} & {20} & {4602.182} & {2713.144} & {3218} & {4602.182} & {3253} & {4589.822} & {4576}\tabularnewline
\hline 
{500\_3k} & {1} & {11372.166} & {8358.183} & {7124} & {11372.166} & {474.4} & {11289.831} & {2687}\tabularnewline
{500\_3k} & {2} & {11293.525} & {8352.269} & {2402} & {11077.172} & {2835} & {11293.525} & {3663}\tabularnewline
{500\_3k} & {3} & {11257.063} & {8352.269} & {2774} & {11063.300} & {727.4} & {11257.063} & {3975}\tabularnewline
{500\_3k} & {4} & {11267.662} & {8352.269} & {2279} & {10974.964} & {782.7} & {11267.662} & {4074}\tabularnewline
{500\_3k} & {5} & {11214.219} & {8339.367} & {3179} & {10831.476} & {679.7} & {11214.219} & {4509}\tabularnewline
{500\_3k} & {10} & {10972.091} & {8339.367} & {4692} & {10799.479} & {912.4} & {10972.091} & {5591}\tabularnewline
{500\_3k} & {15} & {11159.572} & {8286.828} & {3631} & {10751.672} & {3379} & {11159.572} & {6946}\tabularnewline
{500\_3k} & {20} & {11038.550} & {8286.828} & {5905} & {10812.779} & {4230} & {11038.550} & {6971}\tabularnewline
\end{tabular}
\end{table}

\begin{table}[h]\footnotesize
\caption{Instances with unknown optimal solutions}
	\centering
\begin{tabular}{ccc|cc|cc|cc}
\hline 
\multirow{2}{*}{{instance}} & \multirow{2}{*}{{$k$}} & \multirow{2}{*}{{best}} & \multicolumn{2}{c|}{{enumeration}} & {CPLEX} &  & \multicolumn{2}{c}{{Gurobi}}\tabularnewline
\cline{4-9} 
 &  &  & {sol} & {$t$ (s)} & {sol} & {$t$ (s)} & {sol} & {$t$ (s)}\tabularnewline
\hline 
{500\_10k} & {1} & {31847.822} & {27998.844} & {6100} & {31547.096} & {0.040} & {31847.822} & {1}\tabularnewline
{500\_10k} & {2} & {32453.585} & {27998.844} & {7183} & {31547.096} & {2676} & {32453.585} & {509}\tabularnewline
{500\_10k} & {3} & {32741.433} & {27986.956} & {437.3} & {30709.973} & {4326} & {32741.433} & {272}\tabularnewline
{500\_10k} & {4} & {32134.073} & {27602.152} & {3184} & {30710.467} & {3293} & {32134.073} & {12}\tabularnewline
{500\_10k} & {5} & {32261.153} & {27602.152} & {6980} & {30880.867} & {3453} & {32261.153} & {555}\tabularnewline
{500\_10k} & {10} & {32868.335} & {27602.152} & {6403} & {30610.289} & {5060} & {32868.335} & {416}\tabularnewline
{500\_10k} & {15} & {32579.889} & {27556.169} & {96.57} & {30226.088} & {3517} & {32579.889} & {479}\tabularnewline
{500\_10k} & {20} & {32660.868} & {27556.169} & {189.3} & {30007.443} & {4501} & {32660.868} & {919}\tabularnewline
\hline 
{500\_60k} & {1} & {173626.906} & {165928.512} & {4365} & {173626.906} & {3.894} & {168687.246} & {79}\tabularnewline
{500\_60k} & {2} & {169984.258} & {165762.356} & {1814} & {-} & - & {169984.258} & {2000}\tabularnewline
{500\_60k} & {3} & {171399.820} & {165762.356} & {2057} & {-} & - & {171399.820} & {2937}\tabularnewline
{500\_60k} & {4} & {166948.978} & {165762.356} & {2351} & {-} & - & {166948.978} & {1678}\tabularnewline
{500\_60k} & {5} & {172710.669} & {165762.356} & {3584} & {-} & {-} & {172710.669} & {437}\tabularnewline
{500\_60k} & {10} & {165966.934} & {165615.397} & {3365} & {-} & {-} & {165966.934} & {144}\tabularnewline
{500\_60k} & {15} & {165512.248} & {165067.289} & {3315} & {-} & - & {165512.248} & {14}\tabularnewline
{500\_60k} & {20} & {166051.696} & {165067.289} & {6398} & {-} & {-} & {166051.696} & {4068}\tabularnewline
\hline 
{1k\_1.5k} & {1} & {7830.439} & {4429.553} & {5142} & {7830.439} & {4199} & {7830.439} & {7082}\tabularnewline
{1k\_1.5k} & {2} & {7660.702} & {4406.755} & {3049} & {7660.702} & {500.7} & {7659.381} & {5093}\tabularnewline
{1k\_1.5k} & {3} & {7561.257} & {4331.854} & {2753} & {7561.257} & {394.7} & {7549.120} & {5741}\tabularnewline
{1k\_1.5k} & {4} & {7552.921} & {4331.854} & {3761} & {7552.921} & {6665} & {7551.454} & {2691}\tabularnewline
{1k\_1.5k} & {5} & {7561.257} & {4331.854} & {4126} & {7561.257} & {6775} & {7558.126} & {4247}\tabularnewline
{1k\_1.5k} & {10} & {7499.702} & {4313.156} & {6082} & {7486.617} & {6695} & {7499.702} & {5871}\tabularnewline
{1k\_1.5k} & {15} & {7457.737} & {4212.573} & {2637} & {7457.737} & {6542} & {7448.530} & {5709}\tabularnewline
{1k\_1.5k} & {20} & {7432.879} & {4208.582} & {3862} & {7422.875} & {4892} & {7432.879} & {6725}\tabularnewline
\hline 
{1k\_10k} & {1} & {34497.300} & {27849.420} & {4164} & {34497.300} & {3619} & {33468.985} & {49}\tabularnewline
{1k\_10k} & {2} & {33431.473} & {27453.988} & {3955} & {33431.473} & {2859} & {33308.086} & {6351}\tabularnewline
{1k\_10k} & {3} & {35076.961} & {27449.083} & {2319} & {33788.255} & {2971} & {35076.961} & {81}\tabularnewline
{1k\_10k} & {4} & {34828.276} & {27436.452} & {2905} & {33388.788} & {3102} & {34828.276} & {3937}\tabularnewline
{1k\_10k} & {5} & {35122.442} & {27436.452} & {3200} & {33158.065} & {3165} & {35122.442} & {1306}\tabularnewline
{1k\_10k} & {10} & {34552.295} & {27340.450} & {7001} & {33220.939} & {2952} & {34552.295} & {6163}\tabularnewline
{1k\_10k} & {15} & {34642.263} & {27327.217} & {6969} & {33157.350} & {4653} & {34642.263} & {3900}\tabularnewline
{1k\_10k} & {20} & {34789.795} & {27318.103} & {5733} & {32978.447} & {4106} & {34789.795} & {2139}\tabularnewline
\hline 
{1k\_100k} & {1} & {294083.743} & {272658.632} & {3015} & {294083.743} & {1.156} & {284444.969} & {1254}\tabularnewline
{1k\_100k} & {2} & {277883.845} & {272658.632} & {3860} & {-} & - & {277883.845} & {2339}\tabularnewline
{1k\_100k} & {3} & {277770.801} & {272658.632} & {4468} & {-} & - & {277770.801} & {2931}\tabularnewline
{1k\_100k} & {4} & {287363.683} & {272627.483} & {6531} & {-} & - & {287363.683} & {6850}\tabularnewline
{1k\_100k} & {5} & {293827.978} & {272581.574} & {2464} & {-} & {-} & {293827.978} & {6160}\tabularnewline
{1k\_100k} & {10} & {276657.467} & {272263.665} & {3562} & {-} & {-} & {276657.467} & {2569}\tabularnewline
{1k\_100k} & {15} & {291638.645} & {272127.084} & {3985} & {-} & - & {291638.645} & {5470}\tabularnewline
{1k\_100k} & {20} & {294052.633} & {272114.368} & {5228} & {-} & - & {294052.633} & {1382}\tabularnewline
\hline 
{1k\_350k} & {1} & {986248.932} & {964947.365} & {5146} & {986248.932} & {30.42} & {965489.117} & {4131}\tabularnewline
{1k\_350k} & {2} & {963299.917} & {963270.571} & {6110} & {-} & - & {963299.917} & {1174}\tabularnewline
{1k\_350k} & {3} & {962372.761} & {962136.163} & {3520} & {-} & - & {962372.761} & {3720}\tabularnewline
{1k\_350k} & {4} & {965016.683} & {962136.163} & {3919} & {-} & - & {965016.683} & {1321}\tabularnewline
{1k\_350k} & {5} & {963881.119} & {962136.163} & {4313} & {-} & {-} & {963881.119} & {1551}\tabularnewline
{1k\_350k} & {10} & {962136.163} & {962136.163} & {6521} & {-} & {-} & - & -\tabularnewline
{1k\_350k} & {15} & {962052.538} & {962052.538} & {3701} & {-} & - & - & -\tabularnewline
{1k\_350k} & {20} & {962008.567} & {962008.567} & {2578} & {-} & {-} & - & -\tabularnewline
\hline
\end{tabular}
\end{table}

\section{Conclusions}

This paper has taken into consideration a multidimensional generalization of maximum bisection problem where
weights on the edges are $n$-tuples. A mixed integer linear programming formulation
is introduced with proof of its correctness. Usability of the model is tested on the set of 27 randomly generated graphs with number of vertices ranging from $10$ to $1000$ and number of edges ranging from $15$ to $350000$. The proposed formulation is tested using standard ILP solvers CPLEX and Gurobi, on randomly generated instances. The computational results indicates that the complexity highly increases with the increase of vector dimension especially for the dense graphs.

In future work it may be useful to take into consideration $n$-tuples as weights in several 
related problems, such as Max-Cut, Max $k$-Cut, Max $k$-Vertex Cover, etc.  Other direction could be developing some metaheuristics in cases of large-scale instances which is out of reach for exact methods.


 \label{lastpage}
\end{document}